\theoremstyle{definition}
\newtheorem*{definition*}{Definition}
\newtheorem{theorem}{Theorem}
\newtheorem*{theorem*}{Theorem}
\newcommand{\Supp}{\text{Supp}}
\begin{document}

\title{Contextuality bounds the efficiency of classical simulation of quantum processes}

\author{Angela Karanjai}
\affiliation{Centre for Engineered Quantum Systems, School of Physics, The University of Sydney, Sydney, Australia}
\author{Joel J. Wallman}
\affiliation{Institute for Quantum Computing and Department of Applied Mathematics, University of Waterloo, Waterloo, Ontario N2L 3G1, Canada}
\author{Stephen D. Bartlett}
\affiliation{Centre for Engineered Quantum Systems, School of Physics, The University of Sydney, Sydney, Australia}

\begin{abstract}
Contextuality has been conjectured to be a super-classical resource for quantum computation, analogous to the role of non-locality as a super-classical resource for communication.  We show that the presence of contextuality places a lower bound on the amount of classical memory required to simulate any quantum sub-theory, thereby establishing a quantitative connection between contextuality and classical simulability. We apply our result to the qubit stabilizer sub-theory, where the presence of state-independent contextuality has been an obstacle in establishing contextuality as a quantum computational resource.  We find that the presence of contextuality in this sub-theory demands that the minimum number of classical bits of memory required to simulate a multi-qubit system must scale quadratically in the number of qubits; notably, this is the same scaling as the Gottesman-Knill algorithm. We contrast this result with the (non-contextual) qudit case, where linear scaling is possible.
\end{abstract}

\maketitle

Quantum computers promise to outperform classical computers at several information processing tasks \cite{deutsch-jozsa,lloyd}.  However, the key properties or features of quantum theory that enable this computational advantage remain poorly understood.  Identifying such quantum resources will have immense implications for the way quantum computers are built. One strategy to isolate a quantum resource is to try to classically simulate \emph{sub-theories:} closed subsets of operations such as state preparations, unitaries and measurements of quantum systems; see~\cite{jozsa2008matchgates,bremner2010classical,aaronson2011computational,veitch,jozsa2013classical,koh2015further,pashayan2015estimating}. If the outcome statistics for a quantum sub-theory can be simulated efficiently on a classical computer, then that particular sub-theory is not resourceful for quantum computation.  Designing quantum computers that can perform super-classically will then require operations that lie outside of such sub-theories.

Several results have suggested that the ability to demonstrate contextuality might be a resource for quantum computation \cite{mbqc,aandb,magic}, analogous to how violating a CHSH inequality is a resource for non-local games.  Supporting this perspective are two quantum sub-theories---Gaussian quantum mechanics~\cite{erl}, and the odd-dimensional qudit stabilizer sub-theory~\cite{veitch}---which have been shown to be efficiently simulable classically and do not exhibit contextuality.  However, the qubit stabilizer sub-theory presents itself as potential counterexample.  This sub-theory is efficiently simulable classically and yet can exhibit contextuality such as the Peres-Mermin magic square~\cite{peres,mermin}.  The qubit stabilizer sub-theory plays a central role in the theory of fault-tolerant quantum computation, and so there is a compelling need to understand the relationship between contextuality and quantum computational resources in this sub-theory.  Several attempts to establish contextuality as a resource in this sub-theory have done so by further restricting to non-contextual subsets of operations within the qubit stabilizer sub-theory~\cite{magicqubit,rebit,juan}.  In attempting to classically simulate experiments that demonstrate contextuality, one may require additional memory to track the context, but specific results along these lines have been limited to two qubits~\cite{kleinman1, kleinman2}.

In this Letter, we demonstrate in general terms an explicit effect of any proof of Kochen-Specker contextuality on the efficiency of a classical simulation of any quantum sub-theory. We show that the presence of contextuality places a lower bound on the spatial complexity of any simulation that reproduces the quantum statistics of a sub-theory, i.e., the number of classical bits of storage required by the simulation.  We apply this result to the qubit stabilizer sub-theory and find that the presence of contextuality \emph{does} impose a cost on any classical simulation of this sub-theory: we prove that the number of classical bits required in a classical simulation of the $n$-qubit stabilizer sub-theory must grow quadratically in the number of qubits, rather than linearly as we find in simulations of stabilizer operations that do not exhibit any contextuality~\cite{veitch,juan}.

\paragraph{General framework.}
We begin by introducing the notion of a sub-theory as a restricted class of quantum experiments, and defining what is required of a classical algorithm that simulates the outcome statistics of a sub-theory.  To directly connect with the Kochen-Specker notion of contextuality, we use observables as the central objects of our framework. 

Consider a set of experiments that allow for non-destructive measurements of a set of observables $\mathbb{O}$. We require this set to be closed, meaning that if the outcomes of the measurements of two or more observables in $\mathbb{O}$ can be used to perfectly predict the outcome of a measurement of another observable, then this additional observable must also be an element of $\mathbb{O}$.  We then define a \emph{sub-theory} as the entire class of experiments with which observables in $\mathbb{O}$, and \emph{only} observables in $\mathbb{O}$, can be learnt.  Note that we do not place any restrictions on how the experimenter measures an observable or set of observables.  In general there may be multiple different experiments, that is, different sets of measurements that can be performed to learn the same information, and all of these are included in the sub-theory. A specific example of such a sub-theory is the $n$-qubit stabilizer sub-theory, consisting of all quantum circuits that allow for the measurement of observables in the $n$-qubit Pauli group.  (We note that requirement of closure excludes the type of sub-theories developed in Refs.~\cite{rebit,juan}, where certain observables cannot be directly measured but rather only inferred from other observables, introduced as a mechanism to evade contextuality.) 

We have chosen to define a sub-theory operationally in terms of the observables that one can measure.  However, it is easy to see that a quantum sub-theory is equivalently given by the set of all the pure quantum states allowed in a sub-theory defined by the full set of eigenstates of the allowed observables. For example, the $n$-qubit stabilizer sub-theory can also be defined as consisting of all the quantum circuits, where the state of the system is always described by an $n$-qubit stabilizer state. 

A sub-theory can be characterised by its statistics, that is, the  probabilities for obtaining specific outcomes when measuring an observable conditioned on previous measurement outcomes. We represent the outcomes of all previous measurements that have been performed on a system by an ordered set of observables together with their outcomes, which we call the \emph{record}.  A record is denoted by $\mathcal{R}=\{O_{i}=k_{i}\}$, where $O_{i}$ is the $i^{th}$ measured observable and $k_{i}$ its outcome. Now, let $\mathcal{M}= \{O_{j}=k_{j}\}$ be set of observable outcomes that can be jointly measured. (Operationally, joint measurability means that we have observed from experiments that the order of measurement of these observables does not matter.) The joint conditional probabilities, denoted ${\rm Pr}(\mathcal{M}|\mathcal{R})$, characterise the sub-theory.

We are interested in simulating \emph{quantum} statistics, and so we will restrict our attention to models for which the conditional probabilities ${\rm Pr}(\mathcal{M}|\mathcal{R})$ are given by the Born rule.  (Later, we can relax this restriction to include a wider class of models.)  One can imagine a wide variety of different simulation methods that reproduce these probabilities. For example, one could compute the Born rule probabilities directly using wavefunctions or density matrices.  For the qudit stabilizer sub-theory or Gaussian quantum mechanics, one could more efficiently use a generalized Gottesman-Knill representation \cite{got}, or even more efficiently, sample from their Wigner function distributions~\cite{erl, veitch}. We want to consider \emph{all} such methods, and so we now outline a general framework that can be used to analyse any classical simulation of a sub-theory, based on the ontological models framework \cite{ont}. We start by defining a (classical) state space $\Lambda$.  An element $\lambda \in \Lambda$ will be referred to as an \emph{internal} state of the model, and it encodes all the information about the system needed to determine the outcome statistics of the sub-theory. The size (cardinality) of the internal state space $\Lambda$ will determine the spatial complexity of the simulation, i.e., the number of classical bits of memory required by the simulation. This quantity is the main object of interest and will be used to evaluate the efficiency of a classical simulation. 

A simulation algorithm may be statistical in nature, and so in general we represent the system at a given time (i.e., conditioned on a given record) using a distribution over $\Lambda$. Similarly, $\lambda$ might only probabilistically determine the outcomes of measurements of observables.  We introduce two probability density functions defined on the internal state space of the model. The first, $\mu_{\mathcal{R}}(\lambda)$, is the probability of the internal state being $\lambda\in\Lambda$, given the record $\mathcal{R}$. The second, $\xi_{\mathcal{M}}(\lambda)$, is the probability of the outcomes of the observables $\mathcal{M}$, given that the system is in the internal state $\lambda$.  The classical simulation will then reproduce statistics according to
${\rm Pr}(\mathcal{M}|\mathcal{R})= \sum_{\lambda \in \Lambda}\mu_{\mathcal{R}}(\lambda)\xi_{\mathcal{M}}(\lambda)$.  

We need one additional ingredient to our general framework:  an update map, used to describe how the internal state (and corresponding probability distribution) changes as the result of a measurement.  (Our framework allows for transformations other than measurements, but we restrict our attention to measurement update for clarity.)  Let $\Gamma_{\mathcal{M}}(\lambda'|\lambda)$ be the conditional probability of the state of the system being described by the internal state $\lambda'$ given the measurement $\mathcal{M}$ is performed on a system originally in internal state $\lambda$. This is henceforth known as the \emph{update map} and if the operation $\mathcal{M}$ updates record $\mathcal{R}$ to $\mathcal{R'}$ (by appending $\mathcal{M}$ to the record), then we have 
$\mu_{\mathcal{R'}}(\lambda')=\sum_{\lambda\in\Lambda}\Gamma_{\mathcal{M}}(\lambda'|\lambda)\mu_{\mathcal{R}}(\lambda)$.  In summary, any classical simulation can be defined by four objects, $(\Lambda, \{ \mu_{\mathcal{R}}\},\{\xi_{\mathcal{M}}\},\{\Gamma_{\mathcal{M}}\})$~\footnote{ Our results do not apply to simulations where the state of the system does not have a representation independent of the measurements in the circuit, such as algorithms which estimate circuit probabilities merely by computing inner products.}.

\paragraph{Minimal requirements for simulating a quantum sub-theory.}
While we are considering classical simulations that reproduce the Born rule probabilities of a quantum sub-theory, to prove our lower bounds on efficiency we will only make use of two conditions, which are in general weaker.

Consider an experiment where the record $\mathcal{R}$ describes the system in the simulation, and $\rho$ is the corresponding quantum description. After some measurement or a set of measurements, the record describing the state is now $\mathcal{R'}$, which is quantum mechanically represented as $\rho'$.  In the quantum description of the experiment, the measurement will correspond to a map that updates the state of the system from $\rho$ to $\rho'$. In a simulation, the measurement will correspond to an update map that takes any internal state in the support of $\mu_{\mathcal{R}}$ to one in the support of $\mu_{\mathcal{R'}}$. There may in general be multiple records that have the same quantum description \cite{spekkens}, that is, the records in the set $\{\mathcal{R}_{i}\}$ can all be represented by the same quantum state $\rho$. From here on, we shall use a compact notation where  $\Supp(\mu_{\rho})\equiv \cup_{i}\Supp(\mu_{\mathcal{R}_i})$ to compare the simulation with the quantum representation. We can now use this notation to give our condition of state update as 
\begin{equation}
\tag{State Update}
\label{update}
\begin{split}
\mathcal{M}:\rho \to & \rho'\\
\implies \Gamma_{\mathcal{M}}: \Supp(\mu_{\rho})\to & \Supp(\mu_{\rho'})\\
\end{split}
\end{equation}

Our second condition is a requirement on distinguishability.  If it is possible within the sub-theory to perform a measurement that perfectly distinguishes between two different records, $\mathcal{R}$ and $\mathcal{R'}$, then we say $\mathcal{R}$ and $\mathcal{R'}$ are \emph{single-shot distinguishable} (SSD).  Clearly, two records that are SSD must be represented disjointly in any simulation. Within a quantum description, any two records represented by projectors onto orthogonal subspaces are SSD. Thus, we have our second condition:
\begin{equation}
\tag{SSD}
\label{SSD}
\Supp(\mu_{\rho})\cap \Supp (\mu_{\sigma}) = \emptyset \ \ \text{if} \ {\rm Tr}(\rho\sigma)=0
\end{equation}
In other words, no internal state can be in the support of two records that are SSD. We shall now use \emph{only} \eqref{update} and \eqref{SSD} to prove restrictions on simulations of quantum sub-theories.

\paragraph{Partitioning measurements.}
We now turn to our central question: What is the minimum number of classical bits of memory required to simulate a quantum sub-theory? We shall answer this question by bounding the minimum size of the classical state space required for any simulation of the sub-theory satisfying \eqref{update} and \eqref{SSD}. Having a unique $\lambda \in \Lambda$ for every quantum state in the sub-theory would in general be highly inefficient, and most simulations make use of internal state spaces where a given $\lambda$ can lie in the support of multiple distributions describing distinct records.The SSD condition ensures non-intersecting support of distributions for any pair of records associated with orthogonal states, giving a lower bound on the cardinality of $\Lambda$ equal to the number of orthogonal states in a sub-theory.  This lower bound is generic for quantum sub-theories that do not exhibit contextuality.  (As an example, for the qudit stabilizer sub-theory, this bound requires a number of classical dits that scales linearly in the number of qudits, and this bound is saturated by the discrete Wigner function representation~\cite{veitch}.)  However, as we now show, in certain cases non-orthogonal quantum states must have  non-intersecting support of distributions, giving  stronger bounds for sub-theories that exhibit contextuality.   We introduce the concept of a \emph{partitioning measurement} for a set of quantum states. This will be our central tool for obtaining these stronger bounds. 

Let $s=\{\rho_{i}\}$ be a set of quantum states, and $s_{O,k}=\{\sigma_{k,i}\}$ be the set of post-measurement quantum states after a projective measurement of observable $O$ with outcome $k$.  We say that $O$ is a \emph{partitioning measurement} for $s$ if the set of post measurement states $s_{O,k}$ contains at least one pair of orthogonal quantum states for every outcome $k$.

\begin{theorem}[Partitioning]
\label{first}
If there exists a partitioning measurement for a set of quantum states $s=\{\rho_{i}\}$ within the sub-theory, then
$\cap_{i}\Supp(\mu_{\rho_{i}}) =\emptyset$ for any simulation of this sub-theory.
\end{theorem}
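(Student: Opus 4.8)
The plan is to argue by contradiction, using a shared internal state as a carrier that the partitioning measurement is forced to send into an \eqref{SSD}-forbidden overlap. Suppose $\cap_i\Supp(\mu_{\rho_i})\neq\emptyset$, and fix an internal state $\lambda$ lying in the support of every $\mu_{\rho_i}$. The key observation is that a single $\lambda$ represents all of the $\rho_i$ at once, so any update map applied to $\lambda$ must respect \eqref{update} relative to each $\rho_i$ simultaneously.

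First I would measure the partitioning observable $O$ on $\lambda$. Because the outcome probabilities $\xi_{O=k}(\lambda)$ are normalized over $k$, some outcome $k$ occurs with positive probability; fix such a $k$. Since $\mu_{\rho_i}(\lambda)>0$ and $\xi_{O=k}(\lambda)>0$, the relation ${\rm Pr}(O=k|\rho_i)=\sum_{\lambda''}\mu_{\rho_i}(\lambda'')\xi_{O=k}(\lambda'')$ shows that outcome $k$ is achievable on every $\rho_i$, so each post-measurement state $\sigma_{k,i}$ is well defined and the full set $s_{O,k}$ exists. The update map $\Gamma_{\{O=k\}}$ then sends $\lambda$, with positive probability, to some post-measurement internal state $\lambda'$.

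Next I would apply \eqref{update}. For each $i$ the measurement of $O$ with outcome $k$ realizes the transformation $\rho_i\to\sigma_{k,i}$, so \eqref{update} gives $\Gamma_{\{O=k\}}:\Supp(\mu_{\rho_i})\to\Supp(\mu_{\sigma_{k,i}})$. Crucially, $\Gamma_{\{O=k\}}$ is a single object determined by the measurement alone, acting on $\lambda$ without reference to which $\rho_i$ we regard it as representing; since $\lambda$ lies in every $\Supp(\mu_{\rho_i})$, its image $\lambda'$ must lie in every $\Supp(\mu_{\sigma_{k,i}})$, i.e.\ $\lambda'\in\cap_i\Supp(\mu_{\sigma_{k,i}})$. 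Finally I would invoke the partitioning hypothesis: for this outcome $k$ the set $s_{O,k}$ contains an orthogonal pair $\sigma_{k,a},\sigma_{k,b}$ with ${\rm Tr}(\sigma_{k,a}\sigma_{k,b})=0$, whence \eqref{SSD} forces $\Supp(\mu_{\sigma_{k,a}})\cap\Supp(\mu_{\sigma_{k,b}})=\emptyset$. This contradicts $\lambda'\in\cap_i\Supp(\mu_{\sigma_{k,i}})$, and the contradiction proves $\cap_i\Supp(\mu_{\rho_i})=\emptyset$.

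The main obstacle, and the conceptual heart of the argument, is justifying that one and the same update map carries the shared $\lambda$ into the intersection of all post-measurement supports, rather than treating each $\rho_i$ with its own separate map. This is precisely where the measurement-independence of $\Gamma_{\{O=k\}}$ enters, and it is what makes the ``for every outcome $k$'' clause of the partitioning definition do real work: whichever outcome actually occurs, an orthogonal pair is guaranteed among $\{\sigma_{k,i}\}$, so the \eqref{SSD} contradiction is unavoidable regardless of the probabilistic branch taken.
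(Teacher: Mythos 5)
Your proof is correct and follows essentially the same route as the paper's: assume a shared internal state $\lambda$, use \eqref{update} to force its image $\lambda'$ into $\cap_i\Supp(\mu_{\sigma_{k,i}})$ for whichever outcome $k$ occurs, and derive a contradiction from \eqref{SSD} via the orthogonal pair guaranteed for every outcome. Your added care in choosing an outcome $k$ with $\xi_{O=k}(\lambda)>0$ (so that all post-measurement states $\sigma_{k,i}$ exist) is a minor refinement the paper leaves implicit, not a different argument.
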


\begin{proof}
The proof is by contradiction.  Assume that there exists an internal state $\lambda \in \cap _i \Supp(\mu_{\rho_{i}})$. According to \eqref{update}, as a result of the measurement of $O$ with outcome $k$, $\lambda$ must be mapped to $\lambda' \in \cap _i \Supp(\mu_{\sigma_{k,i}})$. However, $s_{O,k}$ contains a pair of orthogonal states, and so $\cap _i \Supp(\mu_{\sigma_{k,i}})=\emptyset$ according to \eqref{SSD}. Noting that for the relevant update map $\Gamma_{O,k}$, $\lambda$ must lie in the pre-image, that is, $\lambda\in \Gamma_{O,k}^{-1}[\cap _i \Supp(\mu_{\sigma_{k,i}})]$
 gives us the required contradiction. 
\end{proof}

As an example of a partitioning measurement for a set of non-orthogonal quantum states, consider the set of two-qubit states used in the Pusey-Barrett-Rudolph (PBR) theorem~\cite{PBR}, $s=\{|00\rangle, |0{+}\rangle,|{+}0\rangle,|{+}{+}\rangle\}$, where $|0\rangle$ and $|{+}\rangle$ are the $+1$ eigenstates of the Pauli $Z$ and $X$ operators respectively. A partitioning measurement for this set is  measurement of the observable $YY$ ($=Y\otimes Y$), which has outcomes $YY=\pm1$. For $YY=+1$, the states $|00\rangle$ and $|{+}{+}\rangle$ in $s$ are mapped to two orthogonal quantum states; for $YY=-1$, the states $|0{+}\rangle$ and $|{+}{0}\rangle$ in $s$ are mapped to two orthogonal quantum states. That is, the set of post-measurement quantum states includes an orthogonal pair of states for either measurement outcome. Thus, according to Theorem \ref{first}, $\Supp(\mu_{\scriptscriptstyle{|00\rangle}})\cap \Supp(\mu_{\scriptscriptstyle{|0{+}\rangle}})\cap\Supp(\mu_{\scriptscriptstyle{|{+}0\rangle}})\cap\Supp(\mu_{\scriptscriptstyle{|{+}{+}\rangle}})=\emptyset$. Thus Theorem \ref{first} provides a proof of the PBR theorem without the assumption of preparation independence.

\paragraph{Contextuality in a sub-theory.}
When a partitioning measurement exists for a set of quantum states, Theorem \ref{first} tells us that no single internal state can be in the common support of all the distributions corresponding to the quantum states in the set.  As such, the existence of a partitioning measurement provides a lower bound on the number of internal states needed in a simulation.  What is needed, then, is a method of finding partitioning measurements in general sub-theories.  As we now show,  a partitioning measurement exists whenever one is able to construct a proof of contextuality. With this, one can use existing proofs of contextuality (e.g.,~\cite{csw, belen, sheaf, sam,kai}) to identify partitioning measurements for sub-theories of interest. 

A \emph{non-contextual value assignment} (NCVA) of a set of  operators $\mathbb{\hat{O}}$ is a function $\nu:\mathbb{\hat{O}}\to\mathbb{C}$ such that $\nu(\hat{O}_{\alpha})$ is an eigenvalue of $\hat{O}_{\alpha}$ and $\nu(\hat{O}_{\alpha})\cdot\nu(\hat{O}_{\beta})=\nu(\hat{O}_{\alpha}\cdot\hat{O}_{\beta})$ if $\hat{O}_{\alpha}$ and $\hat{O}_{\beta}$ commute.  For certain sets of operators that represent observables in quantum mechanics, it can be shown that a NCVA cannot exist~\cite{KS}.  This is known as a proof of \emph{contextuality}.

Consider a set of quantum states $s=\{\rho_{i}\}$.  Let  $\mathbb{O}_{s}$ be the set of all observable operators that have at least one eigenstate in $s$. Now consider the set of the products of the commuting elements in $\mathbb{O}_{s}$, that is,  $\mathbb{\tilde{O}}_{s}=\{O_{\alpha}O_{\beta}|O_{\alpha},O_{\beta}\in \mathbb{O}_{s}\ \text{and} \ [O_{\alpha},O_{\beta}]=0\}$.

 \begin{theorem}
  \label{main1}
If $O_{P} \in \mathbb{\tilde{O}}_{s}$ and $\mathbb{O}_{s}\cup O_{P}$ allows a proof of contextuality then $O_{P}$ is a partitioning measurement for the set $s$.
\end{theorem}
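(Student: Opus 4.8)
The plan is to prove the contrapositive: assuming that $O_P$ is \emph{not} a partitioning measurement for $s$, I will construct a non-contextual value assignment (NCVA) on $\mathbb{O}_s\cup\{O_P\}$, contradicting the hypothesis that this set admits a proof of contextuality. By the definition of a partitioning measurement, if $O_P$ fails to partition $s$ then there is at least one outcome $k$ for which the post-measurement set $s_{O_P,k}=\{\sigma_{k,i}\}$ contains \emph{no} orthogonal pair; that is, the collapsed states $\sigma_{k,i}$ are pairwise non-orthogonal. I would fix this favourable outcome $k$ and use its branch to define the assignment.

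The engine of the construction is the elementary fact that two non-orthogonal pure states cannot be eigenstates of a common Hermitian observable with \emph{different} eigenvalues: if $O|\phi\rangle=a|\phi\rangle$ and $O|\psi\rangle=b|\psi\rangle$ then, using $O=O^{\dagger}$, one gets $(a-b)\langle\phi|\psi\rangle=0$, so $\langle\phi|\psi\rangle\neq 0$ forces $a=b$. Applied to the pairwise non-orthogonal states $\{\sigma_{k,i}\}$, this says that whenever two of them are eigenstates of the same observable they must agree on its eigenvalue. I would therefore define $\nu$ by pooling the eigenvalue data of all the $\sigma_{k,i}$: for each observable that has some $\sigma_{k,i}$ as an eigenstate, set $\nu(O)$ equal to that eigenvalue, and set $\nu(O_P)$ equal to the eigenvalue labelled by the chosen outcome $k$. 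The lemma guarantees that $\nu$ is single-valued.

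It then remains to check that $\nu$ is defined on all of $\mathbb{O}_s\cup\{O_P\}$ and is multiplicative on commuting pairs, so that it is a genuine NCVA. For any $O\in\mathbb{O}_s$ that commutes with $O_P$, the projector onto the $k$-eigenspace of $O_P$ commutes with $O$, so the collapse of an original eigenstate of $O$ remains an eigenstate of $O$ with the same eigenvalue; such observables are automatically covered by the branch, and within a commuting family sharing a common $\sigma_{k,i}$ the product rule holds because eigenvalues of commuting observables multiply on a common eigenstate. Writing $O_P=O_\alpha O_\beta$ with $O_\alpha,O_\beta\in\mathbb{O}_s$ commuting (as guaranteed by $O_P\in\mathbb{\tilde{O}}_s$), the assignment must satisfy $\nu(O_P)=\nu(O_\alpha)\nu(O_\beta)$, and it is precisely the branch-wide agreement enforced by non-orthogonality that is meant to make this constraint, and every alternative factorization of $O_P$, mutually consistent.

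The main obstacle I anticipate is establishing consistency for those observables in $\mathbb{O}_s$ that do \emph{not} commute with $O_P$: for these the $k$-eigenspace projector need not preserve the eigenstate property, so their values cannot simply be read off the branch and must instead be inherited from the original states $\rho_i$ while still respecting every commuting product relation appearing in the contextuality proof. The heart of the argument is thus to show that the single favourable outcome $k$ supplies enough agreement to defeat \emph{every} sign obstruction that a Kochen--Specker proof exploits---equivalently, that a branch containing no orthogonal pair can never encode the parity contradiction underlying a proof of contextuality. I would attack this by translating the proof of contextuality into its family of commuting product constraints, identifying each constraint with a context in which the relevant observables are jointly measurable alongside a function of $O_P$, and verifying that the pairwise non-orthogonality of $\{\sigma_{k,i}\}$ propagates agreement through all of these contexts via \eqref{update} and \eqref{SSD}; the orthogonal pair that partitioning would have forced in outcome $k$ is exactly the witness of an inconsistent constraint, as the two-qubit PBR example following Theorem~\ref{first} illustrates.
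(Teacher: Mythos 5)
You have correctly identified the paper's overall strategy---prove the contrapositive by building a non-contextual value assignment from a non-partitioned branch, using the elementary lemma that two non-orthogonal pure states cannot be eigenstates of a Hermitian observable with different eigenvalues---but your execution diverges from the paper's at the decisive point, and that point is left unproven. You anchor the assignment $\nu$ in the collapsed states $\{\sigma_{k,i}\}$, which covers only those observables preserved by the $k$-outcome projection; as you yourself note, observables in $\mathbb{O}_s$ that do not commute with $O_P$ are not covered by the branch, and your plan for them (``inherit from the original states while respecting every commuting product relation'') is exactly the unresolved content of the theorem. Your final paragraph is a research programme (``I would attack this by translating the proof of contextuality into its family of commuting product constraints\dots''), not an argument, so the proposal does not close. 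There is a second, unflagged problem: inheriting values from the original states is ill-defined whenever $s$ itself contains an orthogonal pair, since two orthogonal states in $s$ can be eigenstates of the same observable with different eigenvalues; your construction never performs the case split needed to handle this.

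The paper's proof shows that the obstacle you anticipate is an artifact of where you read off the eigenvalues. It first disposes of the case where $s$ contains an orthogonal pair, and in the remaining case defines $\nu$ on \emph{all} of $\mathbb{O}_s$ directly from the pre-measurement set $s$: by the definition of $\mathbb{O}_s$, every $O_\alpha \in \mathbb{O}_s$ has at least one eigenstate in $s$, and the absence of orthogonal pairs in $s$ makes the pooled value $\nu(O_\alpha)$ single-valued by precisely your lemma---commutation with $O_P$ is never invoked. The measured branch is then used only once, to pin down $\nu(O_P)$: writing $O_P = O_\alpha O_\beta$ with $[O_\alpha, O_\beta] = 0$ and $O_\alpha, O_\beta \in \mathbb{O}_s$, the measurement with outcome $k$ sends eigenstates of $O_\alpha$ in $s$ to joint eigenstates with $O_\beta$-eigenvalue $k/\nu(O_\alpha)$, and eigenstates of $O_\beta$ in $s$ to joint eigenstates with $O_\beta$-eigenvalue $\nu(O_\beta)$; if these eigenvalues differed, the branch would contain an orthogonal pair (this is where \eqref{update} and \eqref{SSD} do their work, as in Theorem~\ref{first}), so non-partitioning forces $\nu(O_\alpha)\nu(O_\beta) = k$ for every such factorization, and setting $\nu(O_P) = k$ yields the NCVA contradicting the assumed proof of contextuality. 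In short, the fix for your gap is not to propagate agreement through all contexts of the Kochen--Specker proof, but to define the assignment on $s$ itself---where it is total by construction---and use the favourable branch solely to enforce the product constraint on $O_P$.
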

 
\begin{proof}
If $s$ contains an orthogonal pair of states, then it follows from the definition of $\mathbb{O}_{s}$, that $\exists \ O_{P}\in \mathbb{O}_{s}$, which is a partitioning measurement for $s$.  Let $\nu:\mathbb{{O}}_{s}\to\mathbb{C}$ be the function that assigns eigenvalues to the operators in $\mathbb{O}_{s}$. If $s$ contains no orthogonal pairs of states then all the eigenstates of an observable ${O}_{\alpha}\in \mathbb{O}_{s}$ in $s$ must have the same eigenvalue $\nu({O}_{\alpha})$. If the observable corresponding to the operator ${O}_{\alpha}{O}_{\beta}$ is measured, where ${O}_{\alpha}$ and ${O}_{\beta}$ commute, and the outcome of the measurement is $k(O_{\alpha}O_{\beta})$, then all eigenstates of ${O}_{\alpha}$ will be mapped to a state which lies in the intersection of the eigenspaces of ${O}_{\alpha}$ and ${O}_{\beta}$ with eigenvalues $\nu({O}_{\alpha})$ and $\frac{k(O_{\alpha}O_{\beta})}{\nu({O}_{\alpha})}$ respectively. Similarly, all eigenstates of ${O}_{\beta}$ will be mapped to a state which is an eigenstate of ${O}_{\beta}$ and ${O}_{\alpha}$ with eigenvalues $\nu({O}_{\beta})$ and $\frac{k(O_{\alpha}O_{\beta})}{\nu({O}_{\beta})}$ respectively. In order for the post measurement states to \emph{all} be mutually non-orthogonal, we require that all the eigenstates of an operator have the same eigenvalue, and therefore $\nu({O}_{\alpha})\nu({O}_{\beta})=k(O_{\alpha}O_{\beta}) \forall O_{\alpha},O_{\beta}\in \mathbb{O}_{s}$ thus proving the result.
\end{proof}

As an example, note that if $s$ is the set of PBR states, then $YY\in \tilde{\mathbb{O}}_{s}$ and $\mathbb{O}_{s}\cup YY$ are all the observables that make up the Peres-Mermin magic square.

It is easy to see that if a set of states $s$ is allowed in a sub-theory, then measurement of all the observables in $\mathbb{\tilde{O}}_{s}$ are allowed within the sub-theory. Thus theorem \ref{main1} can be used to find partitioning measurements for sets of states within the sub-theory. We can now describe our method for lower bounding the size of the internal state space of any classical simulation of a quantum sub-theory. Consider a quantum sub-theory defined by $S$, the set of all allowed pure quantum states. Using theorem \ref{main1}, one can find the cardinality of the largest set of pure quantum states that do not have a partitioning measurement in the sub-theory; denote this number by $m$. Theorem \ref{first} implies that a single internal state can be in the support of at most $m$ pure quantum states. Thus a lower bound on the number of internal states required by a classical simulation of the sub-theory is $|S|/m$.

\paragraph{Qubit stabilizer sub-theory.}
As an example of this approach, we now bound the minimum number of internal states required in a simulation of the qubit stabilizer sub-theory.
The $n$-qubit Pauli group $\mathcal{P}_n$ is the group made up of all tensor products of $n$ Pauli matrices, together with a multiplicative factor of $\pm i$ and $\pm 1$. The Hermitian elements of this group represent observables; this set is closed and thus defines a sub-theory. For two qubits, we find that the cardinality of the largest set of pure stabilizer states that satisfy the condition necessary for having a non-empty overlap, is 5. An example of such a set is 
\{$[XI,IX,XX]$, $[ZI,IZ,ZZ]$, $[XI,IZ,XZ]$, $[YI,IX,YX]$, $[ZZ,YX,XY]$\}, where the quantum states are defined in terms of their stabilisers. So, for any set of 6 or more two-qubit stabilizer states, there exists a partitioning measurement. The following theorem extends this bound to an $n$-qubit system:

\begin{theorem}
\label{maintheorem}
Let $s_{n}=\{|\psi_i\rangle\}$ be a set of pure $n$-qubit stabilizer states, where $n\geq2$. Then $\cap_{i}\Supp(\mu_{|\psi_i\rangle})= \emptyset$ if $|s_{n}|>3^{n-2}5$.
\end{theorem}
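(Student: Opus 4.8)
The plan is to prove the contrapositive: a set $s_n$ of \emph{pairwise non-orthogonal} stabilizer states that admits no partitioning measurement can contain at most $3^{n-2}\cdot 5$ elements, which combined with Theorem~\ref{first} yields the claim. First I would dispose of the easy case: if $s_n$ contains any orthogonal pair $|\psi_a\rangle\perp|\psi_b\rangle$, then \eqref{SSD} already forces $\Supp(\mu_{|\psi_a\rangle})\cap\Supp(\mu_{|\psi_b\rangle})=\emptyset$, hence $\cap_i\Supp(\mu_{|\psi_i\rangle})=\emptyset$ with no further work. So the entire content is to bound the size of a pairwise non-orthogonal set with no partitioning measurement, and the argument proceeds by induction on $n$, with the base case $n=2$ supplied by the explicit five-state configuration and the Peres--Mermin magic square exhibited above.

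For the inductive engine I would restate non-partitionability in the language of Theorem~\ref{main1}. Its proof shows that if $s_n$ is pairwise non-orthogonal and no element of $\tilde{\mathbb{O}}_{s_n}$ is a partitioning measurement, then there is a single non-contextual value assignment $\nu$ on $\mathbb{O}_{s_n}$ with which every state of $s_n$ is compatible, in that each $|\psi_i\rangle$ takes the eigenvalue $\nu(O)$ on every observable $O$ it is an eigenstate of. Two clean local facts then drive the counting. First, on a single qubit the Paulis $X,Y,Z$ pairwise anticommute, so the consistency condition is vacuous and $\nu$ is an arbitrary sign triple; exactly one eigenstate per basis is compatible with $\nu$, so a fixed assignment is compatible with exactly three single-qubit stabilizer states, which is the origin of the factor of three. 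Second, on two qubits the magic-square obstruction prevents $\nu$ from being defined consistently on all Paulis, and this is precisely what caps the compatible count at $5$ rather than the naive $3^2=9$.

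The inductive step is to show $M_n\le 3M_{n-1}$, where $M_n$ denotes the maximal size of a pairwise non-orthogonal, non-partitionable set. I would single out qubit~$1$ and split $s_n$ according to the local stabilizer subgroup $\mathcal{S}_1=\{g\in\mathcal{S}_\psi:\ g\ \text{acts trivially on qubits }2,\dots,n\}$. When $\mathcal{S}_1=\{I,\epsilon P_1\}$ the state factorizes as $|p\rangle_1\otimes|\chi\rangle$, its qubit-$1$ type is pinned by $\nu$ to one of only three possibilities, and deleting qubit~$1$ sends $|\chi\rangle$ to an $(n-1)$-qubit stabilizer state compatible with the induced assignment. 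Grouping by the three types and invoking the inductive bound on each group of residual states gives the factor $3M_{n-1}$, and unwinding the recursion from $M_2=5$ produces $3^{n-2}\cdot 5$. For the forward direction, a partitioning measurement found on a compatible $(n-1)$-qubit subset lifts to one for $s_n$ by tensoring with the single-qubit observable $P_1$, exactly as $YY=-(XX)(ZZ)$ arises in the two-qubit example, after which Theorem~\ref{first} gives the empty intersection.

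The hard part will be the states that are \emph{entangled} across the chosen cut, namely those with $\mathcal{S}_1=\{I\}$, for which qubit~$1$ has no definite type and the factor-of-three bookkeeping does not directly apply. Controlling their contribution---by showing that conditioning on a qubit-$1$ Pauli value collapses them into product form without inflating the residual $(n-1)$-qubit set beyond $M_{n-1}$, so that the total remains at most $3M_{n-1}$---is where the real work lies, together with verifying that the lifted observable genuinely partitions \emph{every} outcome branch of $s_n$ rather than only one group. Checking the base case $M_2=5$ exhaustively over the two-qubit stabilizer states is routine but indispensable, since it is the only place the specific number $5$, and hence the $5/9$ improvement over the trivial $3^n$ bound, enters.
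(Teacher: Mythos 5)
Your skeleton coincides with the paper's---an iteration reducing $n$ qubits to $n-1$ at a cost of a factor of $3$, anchored by a brute-force verification that $5$ is the maximal two-qubit set---but your inductive step has a genuine gap, and you have located it yourself: the states entangled across the chosen cut. Your mechanism is static: you classify states by the local stabilizer subgroup on qubit~1 and count the local types compatible with a value assignment, which only works when that subgroup is nontrivial. The paper's mechanism is dynamical and handles all states uniformly: assume $\lambda\in\cap_{i}\Supp(\mu_{|\psi_i\rangle})$ and \emph{measure} a single-qubit Pauli $P$ on the last qubit. By \eqref{update}, whichever outcome $\nu$ occurs, $\lambda$ is carried to some $\lambda'$ lying in $\cap_{i}\Supp(\mu_{\sigma_i})$, where the $\sigma_i$ are the post-measurement states; and every $\sigma_i$ has the product form $|\phi_i\rangle|\nu\rangle$ \emph{regardless of whether $|\psi_i\rangle$ was entangled}, because projectively measuring a single-qubit Pauli disentangles that qubit. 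Entanglement is destroyed by the measurement rather than analyzed, which is exactly the move your proposal defers to ``where the real work lies.'' The factor of $3$ then enters not as a count of compatible local types but as a bound on collapse multiplicity: the paper argues that the map from $s_{k+1}$ to the collapsed set $\tilde{s}_{k+1}$ satisfies $3|\tilde{s}_{k+1}|\geq|s_{k+1}|$, since no more than three pairwise non-orthogonal states can coalesce under a suitably chosen Pauli.

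The dynamical reduction also dissolves your second flagged worry: one never lifts a partitioning measurement back through the cut as a single observable on $s_n$ (your tensoring-with-$P_1$ lift indeed only works on the product branch). Instead the contradiction propagates backwards through the update maps: iterating the collapse down to a two-qubit core of size $>5$, the brute-force base case supplies a partitioning measurement there, which extends trivially to the collapsed product set (a partitioning measurement for $\{|\phi_i\rangle\}$ is one for $\{|\phi_i\rangle|\nu\rangle\}$); Theorem~\ref{first} then forces the collapsed set's supports to have empty intersection, contradicting the existence of the chained image of $\lambda$. Note also that the paper's proof of Theorem~\ref{maintheorem} makes no use of Theorem~\ref{main1} or non-contextual value assignments at all, so your NCVA scaffolding is an unnecessary detour here (it is the paper's tool for \emph{finding} partitioning measurements, not for the qubit counting). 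If you rebuild your argument around measurement collapse, the one substantive claim you would still owe is the multiplicity bound $3|\tilde{s}_{k+1}|\geq|s_{k+1}|$, which the paper itself disposes of only with a brief ``one can see'' remark about sets of more than three non-orthogonal states.
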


\begin{proof}
We prove this theorem iteratively.  
Consider a set of pure $k+1$ qubit stabilizer states, where the $k+1^\text{th}$ qubit is unentangled with the rest of the system. Let $\tilde{s}_{k+1}=\{|\phi_i\rangle|\nu\rangle\}$ be such a set, where $|\nu\rangle$ is a single qubit stabilizer state. One can see that if there exists a partitioning measurement for the set $s_{k}=\{|\phi_i\rangle\}$, then the same measurement will also be a partitioning measurement for $\tilde{s}_{k+1}$. Thus the cardinality of a set of the form $\tilde{s}_{k+1}$ that does not allow a partitioning measurement is upper bounded by the cardinality of the largest set of $k$-qubit stabilizer states that does not allow a partitioning measurement.
Now consider the set $s_{k+1}=\{|\psi_i\rangle\}$, a set of pure $k+1$ qubit stabilizer states. A measurement of a Pauli $P$ only on the last qubit with outcome $P=\nu$ would map the states in $s_{k+1}$ to a set of post measurement states of the form $\tilde{s}_{k+1}$ given above. Note $|\tilde{s}_{k+1}|$ may be smaller than $|s_{k+1}|$, as some states in $s_{k+1}$ may map to the same post-measurement state.  However, one can see that for any set of pure non-orthogonal states larger than 3, there exists a single qubit Pauli measurement that will map the states to a set of post measurement states with more than one state. Thus we have $3|\tilde{s}_{k+1}|\geq|s_{k+1}|$, which in turn implies $3|s_{k}|\geq|s_{k+1}|$.
As we have (by brute force search) that for any $|s_{2}| >5$, there exists a partitioning measurement, we have proved the theorem.
\end{proof}

For the $n$-qubit stabilizer sub-theory, the total number of pure states is $|S|=2^n\prod^{n}_{j}(2^j+1)$ and theorem \ref{maintheorem} shows that the largest number of states that can be represented by the same internal state is $m=5\times3^{(n-2)}$. Thus the minimum number of classical bits required to simulate the sub-theory is $\log_{2}(|S|/m) \approx \frac{1}{2}n(n-1)$. We compare this result with that of the Gottesman-Knill algorithm~\cite{got} (a specific classical simulation), which uses $n(2n+1)$ classical bits. Our result shows that no classical algorithm can scale better than the Gottesman-Knill algorithm, i.e., the number of classical bits must scale quadratically with the number of qubits.

\paragraph{Discussion.}
Our result provides insight into the differences in classical simulation of the qubit and odd-dimensional qudit stabilizer sub-theories.  The internal states in the Gottesman-Knill algorithm have a one-to-one correspondence with each quantum stabilizer state; as a result, this algorithm is efficient but the number of classical bits of memory required scales quadratically (rather that linearly) in the number of qubits.  From Theorem~\ref{main1}, we can see that it is the presence of contextuality that lower bounds the number of quantum states that can be represented by a single internal state, thereby ruling out linear-scaling algorithms such as algorithms that sample from a discrete Wigner function (as is possible with the non-contextual odd-dimensional qudit stabilizer sub-theory~\cite{veitch}) proving the Gottesman-Knill algorithm to be asymptotically optimal for the qubit stabilizer sub-theory. 

We have illustrated our result using the qubit stabilizer sub-theory, in part because of its key role in the theory of quantum computing, but also because the role of contextuality in this sub-theory is well known. Our result could be used to find lower bounds on the spatial complexity of simulations of other sub-theories that are relevant to quantum computing include matchgates~\cite{jozsa2008matchgates}, IQP circuits~\cite{bremner2010classical}, boson sampling circuits~\cite{aaronson2011computational} and thus provide a motivation to investigate the role of contextuality in these sub-theories. These lower bounds can then be compared to the best known classical algorithm at the time. Our approach may also be generalizable to classes of circuits that are not closed, such those of Ref.~\cite{bravyi,jozsa2013classical,koh2015further}.

In proving our results, we note that classical simulations of quantum sub-theories were only required to satisfy \eqref{update} and \eqref{SSD}, both of which are possibilistic requirements on the simulation (rather than necessarily reproducing the full probabilities of the sub-theory).  That is, we only require that the simulation does not predict any event that is impossible according to quantum theory.  Our results do assume that these (im)possibilities are reproduced exactly in the simulation, although we note that the methods of Ref.~\cite{pashayan2017estimation} may be used to extend our results to classical simulation up to small error. Additionally, because the space complexity of a simulation provides a lower bound for the time complexity of the simulation, our result also provides a lower bound for the run time of a classical simulation. 

Finally, we note that research in quantum foundations has sought bounds on the overlap of representations of quantum states on any classical state space in terms of their quantum overlap \cite{leifer2013maximally,epistemic1, epistemic2}.  These bounds typically require consideration of the full set of states and operations for a Hilbert space of a fixed dimension, and focus on pairwise and tri-partite overlap. Our results offers a new approach to these questions for any operational sub-theory of quantum mechanics.

\paragraph{Acknowledgments}
We thank Christopher Chubb and Mordecai Waegell for discussions.  This work is supported by the Australian Research Council (ARC) via the Centre of Excellence in Engineered Quantum Systems (EQuS) project number CE170100009.

\end{document}